\documentclass[letterpaper, 10 pt, conference]{ieeeconf}  

\IEEEoverridecommandlockouts                              

\usepackage{graphics} 
\usepackage{amsmath} 
\usepackage{amssymb}  
\usepackage{tikz}
\usepackage{pgfplots}
\pgfplotsset{width=\columnwidth,compat=1.18} 

\usepackage{tikz}
\usepackage{tikz-cd}
\usetikzlibrary{patterns,quotes,angles}

\newtheorem{theorem}{Theorem}
\newtheorem{proposition}{Proposition}
\newtheorem{definition}{Definition}
\newtheorem{remark}{Remark}

\title{\LARGE \bf Dissipation-assisted stabilization of periodic orbits via actuated exterior impacts in hybrid mechanical systems with symmetry
}

\author{William Clark$^{1}$, Leonardo Colombo$^{2}$, and Anthony Bloch$^{3}$
\thanks{*W. Clark was funded by AFOSR grant FA9550-23-1-0400. L. Colombo acknowledges financial support from Grant PID2022-137909NB-C21 funded by MCIN/AEI/ 10.13039/501100011033 and  iRoboCity2030-CM, Robótica Inteligente para Ciudades Sostenibles (TEC-2024/TEC-62). A.Bloch was partially supported by NSF grant  DMS-2103026, and AFOSR grants FA
9550-22-1-0215 and FA 9550-23-1-0400.}
\thanks{$^{1}$William Clark is with the Department of Mathematics,
        Ohio University, Athens, OH, 45701, USA
        {\tt\small clarkw3@ohio.edu}}%
\thanks{$^{2}$L. Colombo is with Centre for Automation and Robotics (CSIC-UPM), Ctra. M300 Campo Real, Km 0,200, Arganda del Rey - 28500 Madrid, Spain. {\tt\small leonardo.colombo@csic.es}}
\thanks{$^{3}$A. Bloch is with Department of Mathematics, University of Michigan, Ann Arbor, MI 48109, USA. {\tt\small abloch@umich.edu}}
}

\begin{document}

\maketitle
\thispagestyle{empty}
\pagestyle{empty}

\begin{abstract}
Impulsive mechanical systems exhibit discontinuous jumps in their state, and when such jumps are triggered by spatial events, the geometry of the impact surface carries information about the controllability of the hybrid dynamics. For mechanical systems defined on principal $G$-bundles, two qualitatively distinct types of impacts arise: interior impacts, associated with events on the shape space, and exterior impacts, associated with events on the fibers. A key distinction is that interior impacts preserve the mechanical connection, whereas exterior impacts generally do not.  In this paper, we exploit this distinction by allowing actuation through exterior impacts. We study the pendulum-on-a-cart system, derive controlled reset laws induced by moving-wall impacts, and analyze the resulting periodic motions. Our results show that reset action alone does not provide a convincing stabilizing regime, whereas the addition of dissipation in the continuous flow yields exponentially stable periodic behavior for suitable feedback gains.

\end{abstract}

\section{Introduction}

Hybrid dynamical systems combine continuous-time evolution with discrete state resets and arise naturally in a broad range of applications, including robotic locomotion, mechanical systems with impulses, and multi-agent control \cite{Johnson1994,Goebel2012,vanderSchaft2000,Westervelt2007,Holmes2006,Lee2013}.
In many of these problems, the continuous dynamics are not arbitrary but inherit geometric structure from an underlying mechanical system.
In particular, when the configuration space carries symmetries, the dynamics is naturally organized by objects such as momentum maps, reduced spaces, and mechanical connections.
Understanding how these structures interact with impacts is essential for the analysis and control of hybrid mechanical systems.

Over the last years, several works have shown that symmetry plays a nontrivial role in hybrid dynamics.
In hybrid mechanical systems, impacts do not merely interrupt the continuous flow: they may preserve, alter, or obstruct the geometric quantities associated with symmetry, depending on the structure of the switching surface and the reset map.
This perspective has motivated a broader geometric program on hybrid mechanical systems with symmetry, including hybrid reduction, hybrid momentum maps, and the study of periodic orbits in simple hybrid mechanical systems \cite{Ames2006,ColomboIrazu2020,colombo2025generalized,irazu2021reduction,BlochClarkColombo2017}.
From this viewpoint, one is led to ask not only whether symmetries survive the hybrid evolution, but also how they constrain the possibility of using impacts as a control mechanism.

Motivated by these observations, in this paper we build on our previous analysis of impacts in hybrid mechanical systems with symmetry \cite{10591262}, where we showed that the geometry of the switching surface determines whether the mechanical connection is preserved across impact. In particular, impacts supported on vertical surfaces preserve the connection-induced symmetry variable, whereas exterior impacts generally do not. This naturally leads to the geometric control question addressed here: \textit{can the symmetry-modifying effect of exterior impacts be exploited for the generation and stabilization of periodic hybrid motions}?

We study this question in hybrid mechanical systems with symmetry where reset actuation is applied through exterior impacts and dissipation is introduced in the continuous flow. The reset provides directional correction in the symmetry variable, while the dissipative flow induces contraction between impacts. We develop this idea in the pendulum-on-a-cart system, viewed as a mechanical system on a principal bundle. Building on its role in \cite{10591262} as a model for the distinction between interior and exterior impacts, we derive controlled reset laws associated with moving-wall impacts, analyze the resulting periodic orbits, and study their orbital stability through the linearized Poincar\'e map. These findings suggest that exterior reset actuation alone is not sufficient to robustly stabilize the orbit, whereas dissipation in the continuous flow creates the additional contraction needed for orbital stability.

The main contribution of this paper is to show that the geometric distinction between interior and exterior impacts has direct consequences for control in hybrid mechanical systems with symmetry. In particular, exterior impacts allow the reset to act on the symmetry direction, and, when combined with dissipation in the continuous flow, this yields a mechanism for the generation and stabilization of periodic hybrid motions. We develop this idea for the pendulum-on-a-cart system by deriving controlled moving-wall reset laws and analyzing the resulting closed-loop periodic orbits.

The remainder of the paper is organized as follows. Section~II presents the hybrid Hamiltonian and principal-bundle framework. Section~III studies interior and exterior impacts and derives the controlled reset laws. Section~IV analyzes periodic orbit generation and dissipation-assisted stabilization. Section~V concludes the paper.

\section{Hybrid Mechanical Systems with Symmetries}

\subsection{Hybrid Hamiltonian systems}

We begin with the hybrid setting, where the continuous evolution is Hamiltonian and the discrete transitions are described by state-triggered reset/impact maps. A simple hybrid system is specified by a tuple
$\mathcal H = (\mathcal X,\mathcal S,X,\Delta)$, where $\mathcal X$ is a smooth manifold, $X$ is a $C^1$ vector field on $\mathcal X$, $\mathcal S\subset \mathcal X$ is an embedded codimension-one submanifold called the \emph{switching or impact surface}, and $\Delta:\mathcal S\to \mathcal X$ is a smooth embedding called the \emph{reset or impact map}. The associated hybrid dynamical system is
\begin{equation}\label{eq:hybrid_general}
\Sigma:
\begin{cases}
\dot x = X(x), & x\notin \mathcal S,\\
x^+ = \Delta(x^-), & x^- \in \mathcal S.
\end{cases}
\end{equation}

In the Hamiltonian case, the state space is $\mathcal X=T^*Q$ and $X=X_H$ is the Hamiltonian vector field associated with a Hamiltonian $H:T^*Q\to\mathbb R$. Then \eqref{eq:hybrid_general} takes the form
\begin{equation}\label{eq:hybrid_hamiltonian}
\Sigma_H:
\begin{cases}
\dot z = X_H(z), & z\notin \mathcal S_H,\\
z^+ = \Delta_H(z^-), & z^- \in \mathcal S_H,
\end{cases}
\end{equation}
with $z=(q,p)\in T^*Q$ and where $\mathcal S_H\subset T^*Q$ is the switching surface and $\Delta_H:\mathcal S_H\to T^*Q$ is the impact map.

To exclude Zeno behavior, we assume that the set of impact times is closed and discrete, and $\overline{\Delta_H(\mathcal S_H)}\cap \mathcal S_H=\emptyset$.


\subsection{Impact geometry on principal bundles}

We now equip the hybrid Hamiltonian framework introduced above with symmetry. Let $Q$ be a configuration manifold carrying a free and proper left action of a Lie group $G$. Then the orbit space $Q/G$ is a smooth manifold, and the canonical projection $\pi:Q\to Q/G$ defines a principal $G$-bundle. The base manifold $Q/G$ is interpreted as the \emph{shape space}, whereas the fibers encode the symmetry directions. This viewpoint is standard in geometric mechanics and nonlinear control with symmetry; see, for instance, \cite{BL,grizzle2003structure}.

Throughout the paper, we consider systems associated with $G$-invariant mechanical structures on $Q$. The kinetic-energy metric induces the corresponding mechanical connection, and hence the decomposition of tangent vectors into horizontal and vertical components; see Sections 2.9 and 3.10 of \cite{BL}. In particular, the vertical bundle is
$\mathcal V_q := \ker T_q\pi \subset T_qQ$, for $q\in Q$. At the configuration level, impacts are described by an embedded codimension-one submanifold $\mathcal S\subset Q$, the impact surface. Its geometric position relative to the bundle projection $\pi$ will be fundamental in what follows.

\begin{definition}
An impact surface $\mathcal S\subset Q$ is said to be \emph{vertical} if there exists an embedded codimension-one submanifold $\Sigma\subset Q/G$ such that $\mathcal S=\pi^{-1}(\Sigma)$.
\end{definition}

\begin{definition}
An impact surface $\mathcal S\subset Q$ is said to be \emph{horizontal} if $(T\mathcal S)^\perp \subset \mathcal V$, where orthogonality is taken with respect to the kinetic energy metric and $\mathcal V=\ker T\pi$ is the vertical bundle.
\end{definition}

Thus, vertical impact surfaces are lifts of codimension-one submanifolds of the shape space and hence correspond to events determined purely by shape variables. By contrast, non-vertical impact surfaces are not induced from the shape space and may involve the symmetry variables explicitly. Horizontal surfaces form a distinguished subclass of non-vertical surfaces, characterized by the fact that their normal directions lie entirely in the vertical bundle.


\subsection{Controlled impacts and problem statement}

We allow the reset law to depend on a control input applied at impact. Thus, instead of a fixed reset map $\Delta_H$, we consider a family of controlled impact maps
\[
z^+ = \Delta_{H,u}(z^-), \qquad z^- \in \mathcal S_H,
\]
where $u$ denotes the control parameter. Our goal is to understand when such inputs can be used to generate and stabilize periodic orbits in hybrid mechanical systems with symmetry. In particular, we ask how the geometry of the switching surface constrains the ability of the impact map to modify the symmetry direction, and under what additional mechanisms it can contribute to stabilization.

We study this question on principal bundles, with particular emphasis on the distinction between vertical (interior) and non-vertical (exterior) impacts. As we shall see, this distinction determines whether the impact map can alter the connection-induced symmetry variable across impacts, and hence whether it can be used as a mechanism for the generation and stabilization of periodic motions.

\subsection{Example: pendulum on a cart}

We consider the pendulum-on-a-cart; see Fig.~\ref{fig:pendulum_cart}. Its configuration space is $Q=S^1\times \mathbb R$, with coordinates $(\theta,x)$, where $\theta$ denotes the pendulum angle and $x$ the cart position. The system is invariant under translations in the cart coordinate, and therefore admits the structure of a principal $\mathbb R$-bundle $\pi:Q=S^1\times \mathbb R \to S^1$, $(\theta,x)\mapsto \theta$, where the group action is given by $s\cdot (\theta,x)=(\theta,x+s)$ with $s\in \mathbb R$. Thus, the base space $Q/\mathbb{R}\simeq S^1$ is the shape space, while the fiber variable $x$ encodes the symmetry direction.

\begin{figure}[h!]
\centering
\begin{tikzpicture}[scale=1.0, every node/.style={font=\small}, line cap=round, line join=round]

    \draw[thick] (-3,-0.55) -- (3.5,-0.55);
    \foreach \x in {-2.9,-2.7,...,3.3}
        \draw[gray] (\x,-0.55) -- ++(-0.10,-0.16);

    \draw[thick, rounded corners=2pt, fill=gray!10] (-0.75,-0.12) rectangle (0.75,0.58);
    \node at (0,0.23) {$M$};

    \draw[thick, fill=white] (-0.42,-0.12) circle [radius=0.15];
    \draw[thick, fill=white] ( 0.42,-0.12) circle [radius=0.15];
    \fill (-0.42,-0.12) circle [radius=0.04];
    \fill ( 0.42,-0.12) circle [radius=0.04];

    \fill (0,0.58) circle [radius=0.05];

    \draw[dashed] (0,0.58) -- (0,-2.55);

    \draw[very thick] (0,0.58) -- (1.6,-1.95);
    \draw[thick, fill=gray!15] (1.6,-1.95) circle [radius=0.24];
    \node at (1.6,-1.95) {$m$};

    \node[right] at (0.92,-0.75) {$\ell$};

    \path
        (0,-1.3) coordinate (a)
        -- (0,0.58) coordinate (b)
        -- (1.6,-1.95) coordinate (c)
        pic[
            draw=black,
            ->,
            "$\theta$",
            angle eccentricity=1.3,
            angle radius=0.72cm
        ] {angle=a--b--c};

    \draw[->, thick] (-1.9,1.08) -- (-0.45,1.08);
    \node[above] at (-1.18,1.08) {$x$};

\end{tikzpicture}
\caption{Pendulum on a cart. The cart position $x$ defines the symmetry direction, while $\theta$ is the pendulum angle and $\ell$ the rod length.}
\label{fig:pendulum_cart}
\end{figure}
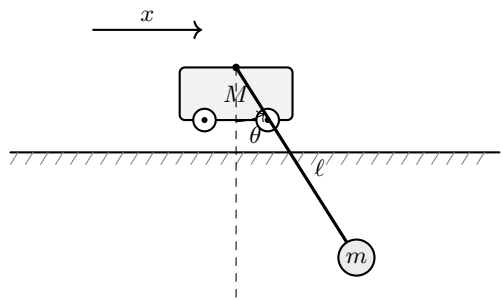

The (hyperregular) Lagrangian of the system is
\begin{align*}
L(\theta,x,\dot\theta,\dot x)
= &\frac12\Big(m\ell^2\dot\theta^2+2m\ell\cos\theta\,\dot x\dot\theta +(M+m)\dot x^2\Big)\\
&+mg\ell\cos\theta.
\end{align*}
The corresponding kinetic-energy metric on $Q$ is
\begin{align*}
g
=& m\ell^2\, d\theta\otimes d\theta
+m\ell\cos\theta\,(dx\otimes d\theta+d\theta\otimes dx)\\&
+(M+m)\, dx\otimes dx.
\end{align*}

The momenta conjugate to $(\theta,x)$ are
\begin{equation}\label{eq:cart_momenta}
\begin{aligned}
p_\theta &= m\ell^2\dot\theta + m\ell\cos\theta\,\dot x,\\
p_x &= m\ell\cos\theta\,\dot\theta + (M+m)\dot x.
\end{aligned}
\end{equation}
Using the inverse Legendre transform, one obtains
\begin{equation}\label{eq:cart_velocities}
\begin{aligned}
\dot\theta &= \frac{(M+m)p_\theta - m\ell\cos\theta\, p_x}{m\ell^2(M+m\sin^2\theta)},\\
\dot x &= \frac{-m\ell\cos\theta\, p_\theta + m\ell^2 p_x}{m\ell^2(M+m\sin^2\theta)}.
\end{aligned}
\end{equation}
Hence the Hamiltonian is
\begin{align}
H(\theta,x,p_\theta,p_x)
=&
\frac{(M+m)p_\theta^2 - 2m\ell\cos\theta\, p_\theta p_x + m\ell^2 p_x^2}
{2m\ell^2(M+m\sin^2\theta)}\nonumber\\
&-mg\ell\cos\theta.\label{eq:cart_hamiltonian}
\end{align}

Since the system has symmetry, the mechanical connection provides the natural way to separate motion along the base from motion along the fiber. The vertical bundle of the principal bundle $\pi:Q\to S^1$ is
\[
\mathcal V_{(\theta,x)}
=
\ker T_{(\theta,x)}\pi
=
\mathrm{span}\!\left\{\frac{\partial}{\partial x}\right\}
\subset T_{(\theta,x)}Q.
\]

The locked inertia tensor measures the kinetic energy along the symmetry directions and identifies the infinitesimal generator of the group action with the associated momentum. In the present case, since the symmetry group is $\mathbb R$, it reduces to the scalar $\mathbb I=M+m$. Hence, the associated mechanical connection is the $\mathbb R$-valued one-form $\mathcal A:TQ\to\mathbb R$ given by
\[
\mathcal A_{(\theta,x)}(\dot\theta,\dot x)
=
\frac{1}{M+m}\Big((M+m)\dot x+m\ell\cos\theta\,\dot\theta\Big).
\] Equivalently, in momentum coordinates, using \eqref{eq:cart_momenta},
$\displaystyle{\mathcal A=\frac{p_x}{M+m}}$. Accordingly, the horizontal space is
\[
\mathcal H_{(\theta,x)}
=
\ker \mathcal A_{(\theta,x)}
=
\mathrm{span}\!\left\{
\frac{\partial}{\partial\theta}
-\frac{m\ell\cos\theta}{M+m}\frac{\partial}{\partial x}
\right\}.
\]

This example is particularly useful because it naturally admits two geometrically distinct codimension-one impact surfaces. The first is
$\mathcal S_{\mathrm{int}}=\{\alpha\}\times\mathbb{R}$,
which is vertical, since it is the lift of the codimension-one submanifold $\{\alpha\}\subset S^1$:
$\mathcal S_{\mathrm{int}}=\pi^{-1}(\{\alpha\})$. Geometrically, this corresponds to an event determined purely by the shape variable $\theta$. By contrast, $\mathcal S_{\mathrm{ext}}=S^1\times\{z\}$, is not induced from the base space, since it fixes the fiber variable $x$ instead. It therefore represents an event acting directly on the symmetry direction. The corresponding switching surfaces in phase space are $\mathcal S_{H,\mathrm{int}}:=T^*Q|_{\mathcal S_{\mathrm{int}}}$,
and 
$\mathcal S_{H,\mathrm{ext}}:=T^*Q|_{\mathcal S_{\mathrm{ext}}}$.

\section{Interior and Exterior Impacts}

\subsection{Interior impacts and geometric obstruction}

We begin with impacts supported on vertical impact surfaces. In the present principal-bundle setting, an interior impact surface is precisely a vertical one, namely a surface of the form $\mathcal S=\pi^{-1}(\Sigma)$, for some codimension-one submanifold $\Sigma\subset Q/G$. Thus, interior impacts are determined entirely by the shape variables and do not act directly on the symmetry directions.

It was shown in \cite{10591262} that verticality of the impact surface is equivalent to preservation of the mechanical connection across impact. The same geometric argument applies to controlled interior resets, provided the reset remains compatible with the bundle structure and satisfies the same equivariance assumptions. Accordingly, if $\iota:\mathcal S\hookrightarrow Q$ denotes the canonical inclusion and $\Delta_u:TQ|_{\mathcal S}\to TQ$ is a controlled interior impact map, then the mechanical connection is preserved across impacts, in the sense that
\begin{equation}\label{eq:connection_preservation_interior}
\Delta_u^*\mathcal A=\iota^*\mathcal A.
\end{equation}
Equation~\eqref{eq:connection_preservation_interior} identifies the geometric obstruction associated with interior impacts: although the reset may alter the shape velocity, it does not provide a new mechanism for modifying the symmetry direction. In this sense, interior impacts do not provide impulsive control over the connection-induced symmetry variable.

For the pendulum-on-a-cart example, the interior impact surface is $\mathcal S_{\mathrm{int}}=\{\alpha\}\times\mathbb R$. The corresponding elastic impact map in momentum coordinates is
\begin{equation}\label{eq:inner_reset_momentum}
\Delta_H^{\mathrm{int}}:
\begin{cases}
p_\theta^+ = -p_\theta^- + \dfrac{2m\ell}{M+m}\cos\theta\, p_x^-,\\[1ex]
p_x^+ = p_x^-.
\end{cases}
\end{equation}
Equivalently, in velocity coordinates,
\begin{equation}\label{eq:inner_reset_velocity}
\Delta^{\mathrm{int}}:
\begin{cases}
\dot\theta^+ = -\dot\theta^-,\\[1ex]
\dot x^+ = \dot x^- + \dfrac{2m\ell}{M+m}\cos\theta\,\dot\theta^-.
\end{cases}
\end{equation}

This obstruction is transparent in this example. Since the reset satisfies $p_x^+=p_x^-$, the symmetry momentum is preserved across impact.  Moreover, because in momentum coordinates $\mathcal A=\frac{p_x}{M+m}$, it follows that $(\Delta_H^{\mathrm{int}})^*\mathcal A=\mathcal A|_{\mathcal S_{\mathrm{int}}}$. Therefore, in the interior-impact case, the reset does not alter the symmetry and hence does not provide a mechanism for controlling the symmetry direction through impacts.




\subsection{Exterior impacts and controlled resets}

We now turn to non-vertical impacts, which in the present setting will be referred to as \emph{exterior impacts}. In contrast with the interior case, exterior impacts act directly on the symmetry variable and therefore need not preserve the mechanical connection. This is precisely the mechanism that makes them relevant for impact-based control.

A particularly simple exterior impact surface for the pendulum-on-a-cart system is the fixed-wall guard
$\mathcal S_{\mathrm{ext}}=S^1\times\{\pm x^*\}$, corresponding to impacts with walls placed at the cart positions $x=\pm x^*$. For this switching surface, the elastic impact map in momentum coordinates is given by
\begin{equation}\label{eq:elastic_exterior_reset}
\Delta_H^{\mathrm{ext}}:
\begin{cases}
p_\theta^+ = p_\theta^-,\\
p_x^+ = -p_x^- + \dfrac{2}{\ell}\cos\theta\,p_\theta^-.
\end{cases}
\end{equation}
Equivalently, in velocity coordinates,
\begin{equation}\label{eq:elastic_exterior_reset_velocity}
\Delta^{\mathrm{ext}}:
\begin{cases}
\dot\theta^+ = \dot\theta^- + \dfrac{2}{\ell}\cos\theta\,\dot x^-,\\
\dot x^+ = -\dot x^-.
\end{cases}
\end{equation}
This reset does not preserve the connection-induced symmetry variable. Indeed, since $\mathcal A=\frac{p_x}{M+m}$, one has $(\Delta_H^{\mathrm{ext}})^*\mathcal A
=
\frac{-p_x+\frac{2}{\ell}\cos\theta\,p_\theta}{M+m}$,
which is not equal to $\mathcal A|_{\mathcal S_{\mathrm{ext}}}
=
\frac{p_x}{M+m}$.

To introduce actuation through the impact, we allow the wall to move with velocity $v$ at the instant of collision. This is realized by the Weierstrass-Erdmann corner conditions (see \S 4.4 of \cite{kirk2004optimal} or \S 3.5 of \cite{brogliato1999nonsmooth}):
\begin{equation}\label{eq:corner_conditions_moving_wall}
\begin{aligned}
p_x^+ &= p_x^- + \varepsilon,\\
H^- - H^+ &= \varepsilon\, v,
\end{aligned}
\end{equation}
where $\varepsilon$ denotes the impulsive exchange in the symmetry momentum. Solving these conditions yields the controlled impact map
\begin{equation}\label{eq:controlled_exterior_reset}
\Delta_{H,v}^{\mathrm{ext}}:
\begin{cases}
p_x^+ = -p_x^- + \dfrac{2}{\ell}\cos\theta\,p_\theta^- - 2v(M+m\sin^2\theta).\\
p_\theta^+ = p_\theta^-,
\end{cases}
\end{equation}
In velocity coordinates, this becomes
\begin{equation}\label{eq:controlled_exterior_reset_velocity}
\Delta_v^{\mathrm{ext}}:
\begin{cases}
\dot\theta^+ = \dot\theta^- + \dfrac{2}{\ell}(v+\dot x^-)\cos\theta,\\
\dot x^+ = -\dot x^- - 2v.
\end{cases}
\end{equation}

Equation~\eqref{eq:controlled_exterior_reset} shows that exterior impacts provide direct access to the symmetry momentum through the impact parameter $v$. This suggests a more general point of view: rather than treating $v$ merely as a physical parameter, one may regard it as a control input for assigning the post-impact symmetry variable. This naturally leads to the following question: \textit{can one prescribe the post-impact value of the connection-induced symmetry variable by a suitable choice of the impact control}? Here the desired value $\Gamma$ represents the symmetry value to be enforced immediately after impact, encoding a design objective such as orbit closure or feedback correction near a target periodic motion.

\begin{proposition}\label{prop1}
Let $\pi:Q\to Q/G$ be a principal $G$-bundle with mechanical connection $\mathcal A$, and let $\mathcal S_{\mathrm{ext}}\subset Q$ be a non-vertical impact surface. Assume that the controlled exterior impact map $\Delta_u:TQ|_{\mathcal S_{\mathrm{ext}}}\to TQ$ depends smoothly on a scalar control parameter $u$, and that the induced post-impact symmetry variable
$\Sigma_u:=\Delta_u^*\mathcal A$ depends affinely on $u$, namely $\Sigma_u(q,v)=F(q,v)+B(q,v)\,u$, with $B(q,v)\neq 0$ for all $(q,v)\in TQ|_{\mathcal S_{\mathrm{ext}}}$ in the region of interest. Assume moreover that $\dim\mathfrak g=1$. 

Then, for any prescribed smooth post-impact value $\Gamma:TQ|_{\mathcal S_{\mathrm{ext}}}\to\mathfrak g$, there exists a unique feedback law $u^\star(q,v)$ such that
$\Sigma_{u^\star}(q,v)=\Gamma(q,v)$ for all $(q,v)\in TQ|_{\mathcal S_{\mathrm{ext}}}$.
\end{proposition}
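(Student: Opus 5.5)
The proof is essentially pointwise linear algebra. Because $\dim\mathfrak g=1$, I identify $\mathfrak g\cong\mathbb R$ by fixing a basis element. Then $\Sigma_u$, $F$, $B$ and $\Gamma$ all become real-valued functions on $TQ|_{\mathcal S_{\mathrm{ext}}}$, and the equation $\Sigma_u(q,v)=\Gamma(q,v)$ is a single scalar equation in the single unknown $u$ at each point $(q,v)$.

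\textbf{Existence.} Write the affine dependence as $F(q,v)+B(q,v)\,u=\Gamma(q,v)$. Since $B(q,v)\neq 0$ on the region of interest, the equation has the solution
\[
u^\star(q,v)=\frac{\Gamma(q,v)-F(q,v)}{B(q,v)}.
\]
I then check smoothness. $F$ and $B$ are smooth because $\Delta_u$ depends smoothly on $u$ and is affine in $u$: we have $F=\Sigma_0$ and $B=\partial_u\Sigma_u$, both pullbacks of the smooth form $\mathcal A$ under smooth maps. $\Gamma$ is smooth by hypothesis. Since $B$ is nonvanishing, $1/B$ is smooth, and so $u^\star$ is a smooth feedback law.

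\textbf{Uniqueness.} Suppose $u_1$ and $u_2$ both satisfy the equation at a point. Subtracting gives $B(q,v)\,(u_1-u_2)=0$, and $B\neq0$ forces $u_1=u_2$. Uniqueness therefore holds pointwise, and hence as a feedback law.

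\textbf{Role of the hypotheses.} The step most in need of care is the hypothesis $\dim\mathfrak g=1$. If $\dim\mathfrak g>1$, a scalar control could not match an arbitrary $\mathfrak g$-valued target, so this is where the dimension assumption enters. Likewise, non-verticality of $\mathcal S_{\mathrm{ext}}$ is what allows $B\neq0$ at all. By the connection-preservation identity \eqref{eq:connection_preservation_interior}, interior impacts would give $B\equiv0$. Here, however, $B\neq 0$ is assumed directly, so the argument only needs to note this consistency.

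\textbf{Check on the pendulum-on-a-cart.} From \eqref{eq:controlled_exterior_reset}, with $\mathcal A=p_x/(M+m)$, I read off
\[
F=\frac{-p_x^-+\frac{2}{\ell}\cos\theta\,p_\theta^-}{M+m},\qquad
B=-\frac{2(M+m\sin^2\theta)}{M+m}.
\]
Since $B<0$ everywhere, the proposition applies globally, with $u=v$.
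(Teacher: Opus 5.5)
Your proposal is correct and matches the paper's proof. Both identify $\mathfrak g\cong\mathbb R$, solve the affine scalar equation pointwise to obtain $u^\star=(\Gamma-F)/B$, and use $B\neq 0$ for uniqueness and smoothness; your extra details (the subtraction argument for uniqueness, $F=\Sigma_0$ and $B=\partial_u\Sigma_u$ for smoothness, and the explicit value $B=-2(M+m\sin^2\theta)/(M+m)<0$ for the cart) are consistent with what the paper states more briefly, though strictly only $\Sigma_u$, not $\Delta_u$, is assumed affine in $u$.
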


\begin{proof}
Since $\dim\mathfrak g=1$, the Lie algebra $\mathfrak g$ is isomorphic to $\mathbb R$. Thus, for each $(q,v)\in TQ|_{\mathcal S_{\mathrm{ext}}}$, the relation
\[
\Sigma_u(q,v)=F(q,v)+B(q,v)\,u
\]
is an affine scalar equation in the unknown $u$. Given a prescribed target $\Gamma(q,v)\in\mathfrak g\simeq\mathbb R$, the condition $\Sigma_u(q,v)=\Gamma(q,v)$ is equivalent to $F(q,v)+B(q,v)\,u=\Gamma(q,v)$.

Since $B(q,v)\neq 0$ this equation admits the unique solution
\[
u^\star(q,v)=\frac{\Gamma(q,v)-F(q,v)}{B(q,v)}.
\]
Because $F$, $B$, and $\Gamma$ are smooth and $B$ does not vanish, the resulting map $u^\star:TQ|_{\mathcal S_{\mathrm{ext}}}\to\mathbb R$ is smooth. By construction,
$\Sigma_{u^\star}(q,v)=\Gamma(q,v)$ for all $(q,v)\in TQ|_{\mathcal S_{\mathrm{ext}}}$.
\end{proof}

\begin{remark}
The assumption $\dim\mathfrak g=1$ is natural in many mechanical systems with a single symmetry direction, such as translation along one axis or rotation about a single angle. In this case, the connection-induced symmetry variable is scalar, and a scalar impact input is enough to assign its post-impact value pointwise. For higher-dimensional symmetry groups, the same construction would require either a vector-valued impact control or a suitable rank condition on the control action.\hfill$\diamond$
\end{remark}

For the pendulum-on-a-cart system with moving-wall impacts, the controlled exterior reset satisfies \eqref{eq:controlled_exterior_reset}. Since $\mathcal A=\frac{p_x}{M+m}$, it follows that
\[
(\Delta_{H,v}^{\mathrm{ext}})^*\mathcal A
=
\frac{-p_x+\frac{2}{\ell}\cos\theta\,p_\theta - 2v(M+m\sin^2\theta)}{M+m}.
\]
Thus, the post-impact connection-induced symmetry variable depends affinely on $v$ with nonvanishing coefficient, and therefore it can be assigned pointwise by a unique choice of the wall velocity.

Proposition \ref{prop1} shows that under a nondegeneracy condition, controlled exterior impacts provide a pointwise mechanism for assigning the post-impact symmetry variable. This does not by itself guarantee the existence of periodic hybrid motions, since periodicity also depends on the continuous-time evolution between impacts. However, it identifies the geometric degree of freedom through which such motions can be generated. In the pendulum-on-a-cart example, this mechanism is realized explicitly through the moving-wall velocity, which will be used in the next section to construct and analyze periodic hybrid orbits.

\section{Periodic Orbit Generation and Stability}

\subsection{Floquet analysis of hybrid orbits}

To study the stability of a periodic orbit in a hybrid system, one must combine the linearized continuous dynamics with the linearization of the reset at each impact. Consider a hybrid dynamical system of the form
\[
\mathcal{H} :
\begin{cases}
\dot{x} = f(x), & h(x)\neq 0, \\
x^+ = \Delta(x^-), & h(x^-)=0,
\end{cases}
\]
where the switching surface is locally described by the zero level set of a smooth function $h$.

Let $\gamma$ be a periodic trajectory with impact times $t_j$, so that $h(\gamma(t_j))=0$ for each $j$. Along the continuous portions of the orbit, the linearized dynamics is described by
\[
\dot{\Phi}(t)=A(t)\Phi(t), \quad A(t)=Df(\gamma(t)).
\]
At each impact time $t_j$, the state undergoes a jump of the form $$\Phi(t_j^+)=\Xi_j\,\Phi(t_j^-),$$
where $\Xi_j$ is the Jacobian associated with the reset map and the impact geometry.

When the switching surface is locally given by $h(x)=0$, the corresponding hybrid Jacobian is
\[
\Xi
=
D\Delta(x^-)\left[I-\frac{f(x^-)\otimes dh(x^-)}{dh(x^-)f(x^-)}\right]
+
\frac{f(x^+)\otimes dh(x^-)}{dh(x^-)f(x^-)}.
\]
Thus, the linearization of the hybrid flow over one period is obtained by concatenating the continuous fundamental solutions with the hybrid Jacobians at the successive impacts.

The resulting monodromy matrix defines the linearization of the Poincar\'e return map. Its eigenvalues are the Floquet multipliers of the hybrid orbit and determine its local orbital stability: if all nontrivial multipliers lie strictly inside the unit disk, then the periodic orbit is exponentially stable \cite{Westervelt2007}, \cite{goodman2019existence}.






\subsection{Dissipation-assisted stabilization}

The previous section shows that controlled exterior impacts provide a mechanism for assigning the post-impact symmetry variable. A natural question is whether this freedom can be used not only to generate periodic hybrid motions, but also to stabilize them. A representative periodic orbit produced by controlled exterior impacts is shown in Fig.~\ref{fig:periodic_orbit}. Our numerical experiments indicate, however, that in the absence of dissipation, reset actuation alone does not yield a convincing stabilizing regime.

This observation motivates the introduction of dissipation in the continuous-time dynamics between impacts. Accordingly, we consider the modified flow
\[
\dot q=\frac{\partial H}{\partial p}, \qquad
\dot p=-\frac{\partial H}{\partial q}-\alpha p,
\]
with $\alpha>0$, while retaining the same exterior impact geometry. The resulting picture is consistent with the geometric interpretation developed throughout the paper: the reset provides directional correction in the symmetry variable, while the dissipative flow induces contraction between impacts.

For the pendulum-on-a-cart example, we consider reset feedback laws at impact of the form
\[
p_x^+ =
\kappa_{p_\theta}(p_\theta^- - p_\theta^*)
+\kappa_{p_x}(p_x^- - p_x^*)
+\kappa_\theta(\theta^- - \theta^*),
\]
where $(\theta^*,p_\theta^*,p_x^*)$ denotes the corresponding point on the periodic orbit. Since impacts occur at fixed wall locations, the pre-impact position $x^-$ is fixed by the guard condition, and therefore no additional dependence on $x^-$ is included in the reset law. For Fig.~\ref{fig:eig_vals}, we restrict to the case $\kappa_{p_x}=0$, so that the reset law depends only on $p_\theta^-$ and $\theta^-$. Using $\alpha=1$, we identify regions in the $(\kappa_\theta,\kappa_{p_\theta})$-plane for which the dominant Floquet multiplier lies strictly inside the unit disk. These regions are shown in Fig.~\ref{fig:eig_vals}.

\begin{figure}[h!]
    \centering
    \includegraphics[width=\columnwidth]{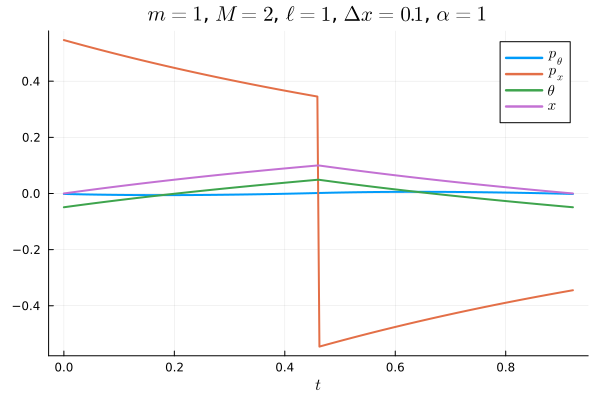}
    \caption{A periodic orbit generated by controlled exterior impacts. The distance between the two walls is $\varepsilon$.}
    \label{fig:periodic_orbit}
\end{figure}

\begin{figure}[h!]
    \centering
    \includegraphics[width=\columnwidth]{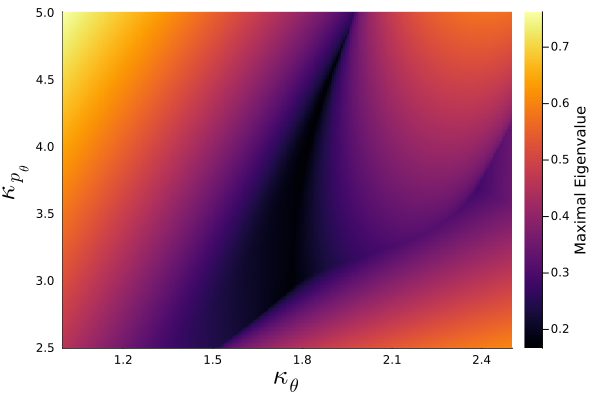}
    \caption{Largest Floquet multiplier for the reset law under consideration. In the computation shown here, $\kappa_{p_x}=0$, so the control input depends only on $p_\theta$ and $\theta$.}
    \label{fig:eig_vals}
\end{figure}

Using stabilizing gains identified from Fig.~\ref{fig:eig_vals}, we compute the basin of attraction shown in Fig.~\ref{fig:basin_attraction}. The basin is extremely thin, indicating that the stabilization mechanism is effective but geometrically delicate.

\begin{figure}[h!]
    \centering
    \includegraphics[width=\columnwidth]{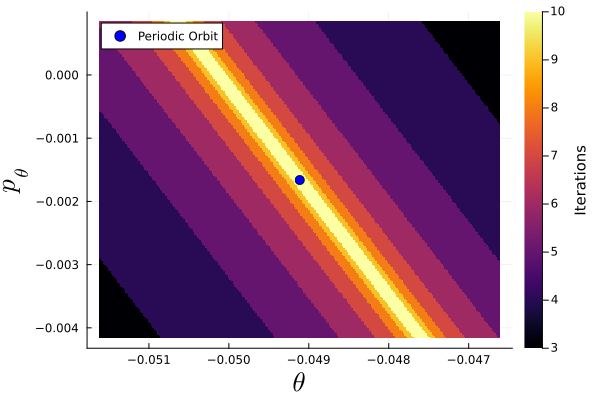}
    \caption{Heat map indicating the number of laps completed before destabilization.}
    \label{fig:basin_attraction}
\end{figure}

\subsection{A geometric criterion for impact-induced stabilization}

The stabilization mechanism identified above combines two ingredients: symmetry correction through the impact map and contraction along the continuous flow. The first is constrained by the geometry of the impact surface, while the second is provided here by dissipation. This leads to the following result.

\begin{theorem}\label{thm:geometric_criterion}
Let $\pi:Q\to Q/G$ be a principal $G$-bundle with mechanical connection $\mathcal A$, and consider a hybrid mechanical system with controlled reset maps $\Delta_{u,j}:TQ|_{\mathcal S_j}\to TQ$ along a periodic hybrid orbit $\gamma$, where $\mathcal S_j$ is the impact surface at the $j$-th collision. If every interior impact surface $\mathcal S_j$ along $\gamma$ is vertical, then each such reset preserves the mechanical connection, i.e. $\Delta_{u,j}^*\mathcal A=\iota_j^*\mathcal A$. Hence any reset-based stabilization mechanism relying on correction in the symmetry direction requires at least one non-vertical impact surface for which $\Delta_{u,j}^*\mathcal A\neq \iota_j^*\mathcal A$.

If, in addition, the continuous flow is replaced by a dissipative extension and the monodromy matrix $M_\gamma$ satisfies $\rho(M_\gamma)<1$, with $\rho$ the spectral radius, then $\gamma$ is exponentially stable.
\end{theorem}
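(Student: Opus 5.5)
The plan is to prove the theorem in three steps, matching the three assertions in its statement.

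\emph{Step 1: connection preservation at vertical impacts.} For each interior surface $\mathcal S_j=\pi^{-1}(\Sigma_j)$, I would invoke the equivalence from \cite{10591262} recalled in \eqref{eq:connection_preservation_interior}. The mechanism is the following. Since $\mathcal S_j$ is a union of $G$-orbits, $\mathcal V_q\subset T_q\mathcal S_j$ for every $q\in\mathcal S_j$. Hence the metric normal $\nu$ to $\mathcal S_j$ is $g$-orthogonal to $\mathcal V_q$, and so $\nu$ is horizontal, i.e.\ $\mathcal A(\nu)=0$. Any reset compatible with the bundle structure (elastic, or controlled but still equivariant) changes the velocity only along the normal, $\dot q^+=\dot q^-+\lambda_u\nu$. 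Therefore $\mathcal A(\dot q^+)=\mathcal A(\dot q^-)$, which is exactly $\Delta_{u,j}^*\mathcal A=\iota_j^*\mathcal A$. The pendulum-on-a-cart check $p_x^+=p_x^-$ in \eqref{eq:inner_reset_momentum} is the concrete instance of this.

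\emph{Step 2: the necessity statement.} This follows by contraposition. Suppose every reset along $\gamma$ is at a vertical surface. Then the connection-induced symmetry variable evolves only under the continuous flow, and the resets contribute nothing in the $\mathfrak g$-direction. So a mechanism whose corrective action is defined as a change of $\mathcal A$ across an impact needs some $j$ with $\Delta_{u,j}^*\mathcal A\neq\iota_j^*\mathcal A$. By Step 1, such a surface cannot be vertical. Proposition~\ref{prop1} and the moving-wall computation show that non-vertical surfaces can in fact realize this change.

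\emph{Step 3: the stability claim.} This is the standard hybrid Floquet argument of Section IV.A. I would fix a Poincar\'e section transverse to $\gamma$ just after an impact and let $P$ be the return map. The dissipative vector field is $C^1$, and the transversality $dh(x^-)f(x^-)\neq0$ is implicit in the definition of $\Xi_j$. Under these conditions the implicit function theorem gives a $C^1$ impact time near each $t_j$, so $P$ is $C^1$ near its fixed point. By the chain rule, $DP$ is the restriction of $M_\gamma=\Phi_N\Xi_N\cdots\Phi_1\Xi_1$ to the section; its eigenvalues are the nontrivial multipliers. If $\rho(M_\gamma)<1$, choose an adapted norm with $\|DP\|\le\rho+\delta<1$. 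Then $P$ is a local contraction, and the fixed point is exponentially stable by \cite{Westervelt2007,goodman2019existence}. Continuity of the flow over a bounded period transfers this to exponential orbital stability of $\gamma$.

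The main obstacle is making Step 3 honest about the trivial multiplier. The full monodromy along the flow direction has eigenvalue $1$, so the hypothesis $\rho(M_\gamma)<1$ must be read as a statement about the multipliers restricted to the section, as in Section IV.A. I would state this convention explicitly. I would also note that dissipation is not logically used in the implication; its role is only to make $\rho<1$ attainable. The other delicate point is in Step 1: "compatible with the bundle structure" must be made precise as "impulse along the metric normal", and I would state this as a standing assumption.
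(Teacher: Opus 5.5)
Your proposal is correct and follows the paper's proof step for step. Connection preservation at vertical surfaces comes from the interior-impact result of Section~III.A. You make this self-contained by observing that the metric normal to $\pi^{-1}(\Sigma_j)$ is horizontal, and by stating explicitly the normal-impulse assumption that the paper leaves as ``compatibility with the bundle structure.'' Contraposition then gives the necessity claim, and the Poincar\'e-map criterion gives exponential stability.

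Your caveats are consistent with the paper's proof. It avoids the trivial multiplier by defining $M_\gamma:=DP_\gamma(\gamma\cap\Sigma)$ directly on the transverse section, and, like you, it never uses dissipation in the implication itself.
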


\begin{proof}
For each interior impact surface $\mathcal S_j$, the hypothesis of verticality means that $\mathcal S_j=\pi^{-1}(\Sigma_j)$ for some codimension-one submanifold $\Sigma_j\subset Q/G$. By the interior-impact result established in Section~III.A, the corresponding reset preserves the mechanical connection, that is, $\Delta_{u,j}^*\mathcal A=\iota_j^*\mathcal A$, where $\iota_j:\mathcal S_j\hookrightarrow Q$ denotes the canonical inclusion. Hence, if $(q^-,\dot q^-)\in TQ|_{\mathcal S_j}$ and $(q^+,\dot q^+)=\Delta_{u,j}(q^-,\dot q^-)$, then $\mathcal A(q^+,\dot q^+)=\mathcal A(q^-,\dot q^-)$. Therefore the connection-induced symmetry variable is unchanged across every interior impact along $\gamma$.

It follows that, if all impacts encountered along $\gamma$ are interior, then the reset does not act on the symmetry direction through $\mathcal A$. Consequently, any reset-based stabilization mechanism that relies on correcting the symmetry component of the state must involve at least one non-vertical impact surface $\mathcal S_j$ for which $\Delta_{u,j}^*\mathcal A\neq \iota_j^*\mathcal A$.  

For the second claim, consider the closed-loop hybrid dynamics obtained after introducing dissipation in the continuous flow. Let $P_\gamma$ be the Poincar\'e return map associated with the periodic orbit $\gamma$, and let $M_\gamma = DP_\gamma(\gamma\cap \Sigma)$ denote its linearization on a local transverse section $\Sigma$. By definition, $M_\gamma$ is the monodromy matrix of the hybrid orbit. If $\rho(M_\gamma)<1$, then all Floquet multipliers lie strictly inside the unit disk. By the standard criterion for hybrid periodic orbits \cite{Westervelt2007},\cite{goodman2019existence} it follows that $\gamma$ is locally exponentially orbitally stable.
\end{proof}

In the pendulum-on-a-cart example, Theorem~\ref{thm:geometric_criterion} is reflected by the explicit reset formulas derived in Section~III. The interior impact surface $\mathcal S_{\mathrm{int}}=\{\alpha\}\times\mathbb R$ is vertical and the corresponding reset preserves the mechanical connection, $(\Delta_H^{\mathrm{int}})^*\mathcal A=\mathcal A|_{\mathcal S_{\mathrm{int}}}$. By contrast, for the exterior wall impacts one has $(\Delta_{H,v}^{\mathrm{ext}})^*\mathcal A\neq \mathcal A|_{\mathcal S_{\mathrm{ext}}}$ in general. Hence only the exterior impacts act nontrivially on the symmetry variable, while the dissipative flow provides the contraction needed for stabilization.

\section{Conclusions and Future Work}

We have shown that the geometry of the impact surface plays a decisive role in the use of impacts for control in hybrid mechanical systems with symmetry. In particular, interior impacts preserve the mechanical connection and therefore do not provide symmetry correction, whereas exterior impacts can act nontrivially on the symmetry variable. 

Future work will address more general dissipative mechanisms, such as Rayleigh-type dissipation, as well as hybrid mechanical systems with nonholonomic constraints. A central question is whether the correction-contraction mechanism identified here persists when symmetry, impacts, dissipation, and nonholonomic structure are simultaneously present.
\bibliographystyle{IEEEtran}
\bibliography{references}

\end{document}